\definecolor{TODO}{rgb}{0.6,0.6,0.6} % TO DO!!!!
\definecolor{TOCHECK}{rgb}{0.8,0.8,0.8} % TO CKECK!!!!
\newtheorem{theorem}{Theorem}
\newcommand{\btheo}{\begin{theorem}}
\newcommand{\etheo}{\end{theorem}}
\newcommand{\bproof}{\begin{proof}}
\newcommand{\eproof}{\end{proof}}
\newtheorem{definition}[theorem]{Definition}
\newcommand{\bdefi}{\begin{definition}}
\newcommand{\edefi}{\end{definition}}
\newtheorem{fact}[theorem]{Fact}
\newcommand{\bprop}{\begin{fact}}
\newcommand{\eprop}{\end{fact}}
\newtheorem{corollary}[theorem]{Corollary}
\newcommand{\bcor}{\begin{corollary}}
\newcommand{\ecor}{\end{corollary}}
\newtheorem{example}[theorem]{Example}
\newcommand{\bex}{\begin{example}}
\newcommand{\eex}{\end{example}}
\newtheorem{lemma}[theorem]{Lemma}
\newcommand{\blemma}{\begin{lemma}}
\newcommand{\elemma}{\end{lemma}}
\newtheorem{remark}[theorem]{Remark}
\newcommand{\bremark}{\begin{remark}}
\newcommand{\eremark}{\end{remark}}
\newtheorem{conj}[theorem]{Conjecture}
\newcommand{\bconj}{\begin{conj}}
\newcommand{\econj}{\end{conj}}
\def\0{{\tt 0}} % Ex.: BPSK modulation => 0 is encoded into +1
\def\1{{\tt 1}} % Ex.: BPSK modulation => 1 is encoded into -1
\def\?{{\tt *}} % erasure symbol
\newcommand{\code}{{\ensuremath{\tt C}}}
\renewcommand{\mid}{\,|\,}
\newcommand {\wt} {{\mathtt{ wt}}}
\newcommand {\RM} {{\text{RM}}}
\newcommand {\disto}{\mathtt{d}}
\newcommand {\Zd}{\widehat{Z}}
\begin{document}
\title{Performance of Polar Codes for Channel and Source Coding}
%: A Survey and Experimental Study}
\author{
\authorblockN{Nadine Hussami}
\authorblockA{
%Electrical Engineering Department\\
AUB, Lebanon,
Email: njh03@aub.edu.lb}
\and
\authorblockN{Satish Babu Korada and R\"udiger Urbanke}
\authorblockA{
%School for Computer and Communication Sciences\\
EPFL, Switzerland,
Email: \{satish.korada,ruediger.urbanke\}@epfl.ch}
}
\maketitle
\begin{abstract}
Polar codes, introduced recently by Ar\i kan, are the first
family of codes known to achieve capacity of symmetric channels using a
low complexity successive cancellation decoder.  Although these codes,
combined with successive cancellation, are optimal in this respect,
their finite-length performance is not record breaking. We discuss several
techniques through which their finite-length performance can be improved.
We also study the performance of these codes in the context of source
coding, both lossless and lossy, in the single-user context as well as
for distributed applications.  
\end{abstract}

\section{Introduction}
Polar codes, recently introduced by Ar\i kan in \cite{Ari08},
are the first provably capacity achieving family of codes for arbitrary
symmetric binary-input discrete memoryless channels (B-DMC) with low
encoding and decoding complexity. The construction of polar codes
is based on the following observation: Let
$G_2 = \bigl[ \begin{smallmatrix} 1 &0 \\ 1& 1 \end{smallmatrix} \bigr]$.
Apply the transform $G_2^{\otimes n}$ (where ``$\phantom{}^{\otimes n}$''
denotes the $n^{th}$ Kronecker power) to a block of $N = 2^n$ bits and
transmit the output through independent copies of a symmetric B-DMC, call it
$W$. As $n$ grows large, the channels seen by individual bits (suitably
defined in \cite{Ari08}) start \emph{polarizing}: they approach either a
noiseless channel or a pure-noise channel, where the fraction of channels
becoming noiseless is close to the capacity $I(W)$.
In the following, let $\bar{u}$ denote the vector $(u_0,\dots,u_{N-1})$
and $\bar{x}$ denote the vector $(x_0,\dots,x_{N-1})$.  Let
$\pi:\{0,\dots,N-1\}\to \{0,\dots,N-1\}$ be the permutation such that if
the $n$-bit binary representation of $i$ is $b_{n-1}\dots b_0$, then
$\pi(i) = b_{0}\dots b_{n-1}$ (we call this a {\em bit-reversal}).
Let $\wt(i)$ denote the number of ones in the binary expansion of $i$.

\subsection{Construction of Polar Codes}
The channel polarization phenomenon suggests to use the noiseless channels
for transmitting information while fixing the symbols transmitted
through the noisy ones to a value known both to sender as well as
receiver. For symmetric channels we can assume without loss of
generality that the fixed positions are set to $0$. Since the fraction of
channels becoming noiseless tends to $I(W)$, this scheme achieves the capacity
of the channel.

In \cite{Ari08} the following alternative interpretation was mentioned;
the above procedure can be seen as transmitting a codeword and decoding
at the receiver with a successive cancellation (SC) decoding strategy. The specific code
which is constructed can be seen as a generalization of the Reed-Muller
(RM) codes.
Let us briefly discuss the construction of RM codes. We follow the
lead of \cite{For01} in which the Kronecker product is used.  RM codes
are specified by the two parameters $n$ and $r$ and the
code is denoted by $\RM(n,r)$. An RM($n,r$) code has block length
$2^n$ and rate $\frac{1}{2^n} \sum_{i=0}^r {n \choose i}$. The code is
defined through its generator matrix as follows. Compute the Kronecker
product $G_2^{\otimes n}$. This gives a $2^n \times 2^n$ matrix. Label the
rows of this matrix as $0,\dots,2^n-1$. One can check that the weight of the 
$i$th row of this matrix is equal to $2^{\wt(i)}$. The
generator matrix of the code $RM(n,r)$ consists of all the rows of
$G_2^{\otimes n}$ which have weight at least $2^{n-r}$. There are exactly
$\sum_{i=0}^r {n \choose i}$ such rows.  An equivalent way
of expressing this is to say that the codewords are of the form $\bar{x}
= \bar{u}G_2^{\otimes n}$, where the components $u_i$ of $\bar{u}$
corresponding to the rows of $G_2^{\otimes n}$ of weight less than $2^{n-r}$
are fixed to $0$ and the remaining components contain the ``information."
Polar codes differ from RM codes only in the choice of generator
vectors $\bar{u}$, i.e., in the choice of which components of  $\bar{u}$
are set to $0$.  Unlike RM codes, these codes are defined for any
dimension $1\leq k \leq 2^n$.  The choice of the generator vectors, as
explained in \cite{Ari08}, is rather complicated; we therefore do not
discuss it here.  Following Ar\i kan, call those components $u_i$ of
$\bar{u}$ which are set to $0$ ``frozen," and the remaining ones 
``information" bits.  Let the set of frozen bits be denoted by $F$
and the set of information bits be denoted by $I$. A polar
code is then defined as the set of codewords of the form $\bar{x} =
\bar{u}G_2^{\otimes n}$, where the bits $i\in F$ are fixed to $0$.

\blemma[RM Code as Polar Code]
A $\RM(n,r)$ is a polar code of length $2^n$ with $F=\{u_i: \wt(i) < r\}$. 
\elemma

\subsection{Performance under Successive Cancellation Decoding}
In \cite{Ari08} Ar\i kan considers a low complexity SC decoder.  We briefly
describe the decoding procedure here.  The bits
are decoded in the order $\pi(0),\dots,\pi({N-1})$. Let the estimates of
the bits be denoted by $\hat{u}_0,\dots,\hat{u}_{N-1}$. If a bit $u_i$ is
frozen then $\hat{u}_i = 0$. Otherwise the decoding rule is the following:
\begin{align*}
\hat{u}_{\pi(i)} = 
\left\{
\begin{array}{cc}
0, & \text{if $\frac{\Pr(y_{0}^{N-1} \mid \hat{u}_{\pi(0)}^{\pi(i-1)}, U_{\pi(i)} =
0)}{\Pr(y_{0}^{N-1}|\hat{u}_{\pi(0)}^{\pi(i-1)}, U_{\pi(i)}= 1)} > 1$},\\
1, & \text{otherwise}.
\end{array}\right.
\end{align*}

Using the factor graph representation between $\bar{u}$ and $\bar{x}$
shown in Figure~\ref{fig:differenttrellis}(a), Ar\i kan showed that this decoder
can be implemented with $O(N\log N)$ complexity.
%Moreover, as shown in \cite{ArT08} by Ar\i kan and Telatar, the above SC decoding
%strategy achieves exponentially vanishing probability of error for rates below $I(W)$.
%\begin{figure}[htp]
%\centering
%\input{ps/trellis}
%\caption{\label{fig:trellis} The factor graph representing the relationship
%between $\bar{x}$ and $\bar{u}$. The numbers to the right of $u_i$ represent the
%decoding order.}
%\end{figure}

\begin{theorem}\cite{ArT08}\label{thm:ArT}
Let $W$ be any symmetric B-DMC with $I(W) > 0$. Let $R < I(W)$ and $\beta < \frac12$ be
fixed. Then for $N=2^n$, $n\geq 0$, the probability of error for polar
coding under SC decoding at block length $N$ and rate $R$
satisfies
%\begin{align*}
$P_e(N,R) = o(2^{-N^\beta}).$
%\end{align*}
\end{theorem}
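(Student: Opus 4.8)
The plan is to reduce the theorem to a quantitative \emph{rate of polarization} statement about the Bhattacharyya parameters, establish that statement by an almost-sure argument on the associated random walk, and then conclude via the standard union bound for successive cancellation decoding.

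\emph{Setup.} As in \cite{Ari08}, for each $i\in\{0,\dots,2^n-1\}$ let $W_n^{(i)}$ be the synthetic channel seen by bit $u_i$ and $Z_n^{(i)}=Z(W_n^{(i)})$ its Bhattacharyya parameter. Encode the indices as a random walk: let $B_1,B_2,\dots$ be i.i.d.\ with $\Pr(B=+)=\Pr(B=-)=\tfrac12$, set $Z_0=Z(W)$, and let $Z_n$ be the Bhattacharyya parameter of the channel reached after $n$ steps, so $Z_n=Z_n^{(i)}$ with $i$ uniform on $\{0,\dots,2^n-1\}$. From \cite{Ari08} we use: (a) the recursion bounds $Z_{n+1}\le Z_n^2$ on a ``$+$'' step and $Z_{n+1}\le 2Z_n$ on a ``$-$'' step; (b) polarization, $Z_n\to Z_\infty\in\{0,1\}$ a.s.\ with $\Pr(Z_\infty=0)=I(W)$; and (c) the block-error bound $P_e(N,R)\le\sum_{i\in I}Z_n^{(i)}$ under SC decoding with information set $I$.

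\emph{Rate of polarization.} The heart of the proof is that on $\{Z_\infty=0\}$ one has, almost surely, $\liminf_{n}\tfrac1n\log_2\log_2(1/Z_n)\ge\tfrac12$. Put $L_n=\log_2(1/Z_n)$ (so $L_n\to\infty$ on this event) and $V_n=\log_2 L_n$. Fix $C\ge 2$ and let $T_C=1+\sup\{m:L_m<C\}$, finite a.s.\ since $L_n\to\infty$; for $m\ge T_C$ we have $L_m\ge C$. On a ``$+$'' step $L_{m+1}\ge 2L_m$, hence $V_{m+1}\ge V_m+1$; on a ``$-$'' step $L_{m+1}\ge L_m-1\ge L_m(1-1/C)$, hence $V_{m+1}\ge V_m-\tfrac{2}{C\ln 2}$. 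Summing from $T_C$ to $n$ and using the strong law of large numbers (the ``$+$'' steps have asymptotic density $\tfrac12$) gives $\liminf_n V_n/n\ge\tfrac12-\tfrac{1}{C\ln 2}$, and letting $C\to\infty$ proves the claim. In particular, for any $\beta'<\tfrac12$ almost every trajectory with $Z_n\to 0$ eventually satisfies $Z_n<2^{-2^{n\beta'}}$, so by Fatou $\liminf_n\Pr\!\big(Z_n<2^{-2^{n\beta'}}\big)\ge\Pr(Z_\infty=0)=I(W)$.

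\emph{Assembling the theorem.} Given $R<I(W)$ and $\beta<\tfrac12$, fix $\beta'$ with $\beta<\beta'<\tfrac12$. By the previous paragraph there is $n_0$ so that for all $n\ge n_0$ we have $\Pr\!\big(Z_n<2^{-2^{n\beta'}}\big)>R$, i.e.\ the number of indices $i$ with $Z_n^{(i)}<2^{-2^{n\beta'}}$ exceeds $RN$; let the information set consist of $\lfloor RN\rfloor$ of them, with the complementary set frozen. Then $P_e(N,R)\le\sum_{i\in I}Z_n^{(i)}\le N\,2^{-2^{n\beta'}}=2^{n-2^{n\beta'}}$. Since $\beta'>\beta$, $2^{n\beta'}-2^{n\beta}\to\infty$ faster than $n$, so $n-2^{n\beta'}+2^{n\beta}\to-\infty$, i.e.\ $P_e(N,R)=o\!\big(2^{-2^{n\beta}}\big)=o\!\big(2^{-N^{\beta}}\big)$.

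\emph{Main obstacle.} The delicate step is the rate estimate. A worst-case bound is useless because a ``$-$'' step can \emph{increase} $Z$ (one only has $Z_{n+1}\le 2Z_n$), so an adversarial interleaving of $\pm$ steps can keep $Z$ far from its doubly-exponential target; the key is that once $Z$ is already small, a ``$-$'' step shrinks $\log(1/Z)$ only by the factor $1-O(1/\log(1/Z))$, which is negligible, and this must be paired with the law of large numbers controlling the count of ``$+$'' steps. One could replace the almost-sure argument by a direct Chernoff bound on the probability that a suitable window of steps contains fewer than a $(\tfrac12-\epsilon)$-fraction of ``$+$'' steps, obtaining the same conclusion with explicit constants at the cost of a messier computation.
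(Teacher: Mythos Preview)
The paper does not prove this theorem at all: it is quoted verbatim from \cite{ArT08} (Ar\i kan--Telatar) and used as a black box, so there is no ``paper's own proof'' to compare against. Your argument is correct and is essentially the Ar\i kan--Telatar proof itself: reduce to the almost-sure rate of polarization $\liminf_n \tfrac1n\log_2\log_2(1/Z_n)\ge\tfrac12$ on $\{Z_\infty=0\}$, establish it by tracking $V_n=\log_2\log_2(1/Z_n)$ once $Z_n$ is small (so that a ``$-$'' step costs only $O(1/C)$ in $V$ while a ``$+$'' step gains $1$), invoke the law of large numbers for the $\pm$ steps, and finish with Fatou plus the union bound $P_e\le\sum_{i\in I}Z_n^{(i)}$.

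Two minor remarks on presentation. First, your use of Fatou gives $\liminf_n\Pr(Z_n<2^{-2^{n\beta'}})\ge I(W)$, which \emph{does} yield ``for all $n\ge n_0$'' once you unpack the definition of $\liminf$ with $\epsilon=I(W)-R$; this step is implicit but worth one sentence. Second, the definition $T_C=1+\sup\{m:L_m<C\}$ is exactly what is needed to ensure $L_m\ge C$ for all $m\ge T_C$ simultaneously (not just at the start), so the ``$-$''-step bound $L_{m+1}\ge L_m(1-1/C)$ is valid at every step of the telescoping sum; this is the right way to handle the bootstrapping and you have it correctly.
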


\subsection{Optimality of the Exponent}
The following lemma characterizes the minimum distance of a polar
code.

\blemma[Minimum Distance of Polar Codes]\label{lem:dminpol}
Let $I$ be the set of information bits of a polar code $\code$. The
minimum distance of the code is given by 
%\begin{align*}
$d_{\min}(\code) = \min_{i\in I} 2^{{\wt}(i)}.$
%\end{align*}
\elemma
\bproof
Let $w_{\min} = \min_{i\in I}\wt(i)$. Clearly, $d_{\min}$ cannot be
larger than the minimum weight of the rows of the generator matrix. Therefore,
$d_{\min} \leq 2^{w_{\min}}$.
On the other hand, by adding some extra rows to the generator matrix we cannot increase
 the minimum distance. In particular, add all the rows of $G_2^{\otimes n}$ with weight 
at least $2^{w_{\min}}$. The resulting code is $\RM(n,n-w_{\min})$. It is well
known that $d_{\min}(\RM(n,r)) = 2^{n-r}$ \cite{For01}.
Therefore,
%\begin{equation*}
$d_{\min} \geq d_{\min}(\RM(n,n-w_{\min})) = 2^{w_{\min}}.$
%\end{equation*}
\eproof
We conclude that for any given rate $R$, if the information bits are 
picked according to
their weight (RM rule), i.e., the $2^nR$ vectors of largest weight, the resulting
code has the largest possible minimum distance. 
The following lemma gives a bound on the best possible minimum distance for any 
non-trivial rate.

\blemma\label{lem:mindistbnd}
For any rate $R > 0$ and any choice of information bits, the minimum distance 
of a code of length $2^n$ is bounded as 
%\begin{align*}
$d_{\text{min}} \leq 2^{\frac{n}{2} + c \sqrt n}$
%\end{align*}
for $n > n_o(R)$ and  a constant $c=c(R)$.
\elemma
\begin{proof}
Lemma \ref{lem:dminpol} implies that $d_{\min}$ is maximized by choosing the 
frozen bits according to the RM rule.
%Order the rows according to their
%weights and choose the largest weight $2^nR$ rows as the rows of the generator matrix.
The matrix $G_2^{\otimes n}$ has $n \choose i$ rows of weight $2^i$. Therefore, 
%\begin{align*}
$d_{\min} \leq 2^{k} : \sum_{i=k+1}^{n}{{n}\choose{i}} < 2^nR \leq \sum_{i=k}^n{{n}\choose{i}}.$
%\end{align*}
For $R >  \frac12$, more than half of the rows are in the generator matrix. Therefore, there is
at least one row with weight less than or equal to $2^{\lceil{\frac{n}{2}}\rceil}$.
%This proves the claim for $R >  \frac12$.
Consider therefore an $R$ in the range $(0, 1/2]$. 
%For the moment, assume that $n$ is even (case of odd $n$ follows similarly) . 
%It is well known that for
%$\Delta = o(\sqrt{n\log n})$
%\begin{align*}
%{n \choose \frac{n}{2} + \Delta} = \frac{2^n}{\sqrt{\pi \frac{n}{2}}}
%e^{-\frac{2\Delta^2}{n}} (1+o(1)).
%\end{align*}
%Let us compute the number of rows with weight in the range $[2^{\frac{n}{2}-c\sqrt n},
%2^{\frac{n}{2}+c\sqrt n}]$, where $c$ is a strictly positive constant.
%Neglecting the error term in the above approximation, we get
%\begin{align*}
%\sum_{i= \frac{n}{2} - c\sqrt n}^{\frac{n}{2} + c\sqrt n} {n \choose i} &=
%2^n\sum_{i=-c\sqrt n}^{c\sqrt n}\frac{1}{\sqrt{\pi
%\frac{n}{2}}}e^{-\frac{2i^2}{n}} \\
%& \approx 2^n\int_{-c}^{c}\frac{1}{\sqrt{\frac{\pi}{2}}} e^{-2 x^2} dx\\
%& = 2^{n} 2\left(1-Q\Big(\frac{c}{\sqrt 2}\Big)\right).
%\end{align*}
Using Stirling's approximation, one can show that  
%\begin{align*}
%\sum_{i= \lceil\frac{n}{2}\rceil - c\sqrt n}^{\lceil \frac{n}{2}\rceil + c\sqrt n} {n \choose i}
%= 2^{n} 2\left(1-Q\Big(\frac{c}{\sqrt 2}\Big)\right)(1 + o(1)).
%\end{align*}
%
%For $R > 0$, choose $c$ sufficiently large so that $2(1-Q(\frac{c}{\sqrt 2})) > 1-R$.
%Therefore, if we want to choose $2^nR$ information bits, then the generator matrix must contain 
%at least one row whose weight lies in the range $[2^{\lceil\frac{n}{2}\rceil-c\sqrt n},
for any $R>0$
at least one row of the generator matrix has weight in the range $[2^{\lceil\frac{n}{2}\rceil-c\sqrt n},
2^{\lceil \frac{n}{2}\rceil+c\sqrt n}]$. Using Lemma~\ref{lem:dminpol} we conclude the
result.
\end{proof}

%The above lemma implies that the minimum distance of the code obtained for any
%choice of the free bits cannot be better than $\sqrt{N} + o(\sqrt N)$.
\begin{theorem}
Let $R>0$ and $\beta > \frac12$ be fixed.  
For any symmetric B-DMC $W$, and $N=2^n$, $n\geq n(\beta, R, W)$,
the probability of error for polar coding 
under MAP decoding at block length $N$
and rate $R$ satisfies 
%\begin{align*}
%$P_e(N,R) = o(2^{-N^\beta}).$
$P_e(N,R) > 2^{-{N}^\beta}$
%\end{align*}
\end{theorem}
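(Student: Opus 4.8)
The statement is a converse (lower bound on error probability) for polar coding under MAP decoding, and the natural route is via the minimum distance bound of Lemma~\ref{lem:mindistbnd}. The key intuition is that a code with small minimum distance must have a non-vanishing probability of confusing the all-zero codeword with a minimum-weight codeword, even under optimal (MAP) decoding. So the plan is: (i) invoke Lemma~\ref{lem:mindistbnd} to get that the polar code of rate $R$ and length $N=2^n$ has $d_{\min} \le 2^{n/2 + c\sqrt n}$; (ii) lower-bound the MAP block error probability by the pairwise error probability between the transmitted codeword and one at distance $d_{\min}$; (iii) lower-bound that pairwise error probability by something like $\kappa^{d_{\min}}$ for a channel-dependent constant $\kappa = \kappa(W) \in (0,1)$; and (iv) observe that $\kappa^{d_{\min}} \ge \kappa^{2^{n/2+c\sqrt n}}$, and check that for $\beta > \tfrac12$ this quantity eventually exceeds $2^{-N^\beta} = 2^{-2^{n\beta}}$, since $2^{n/2 + c\sqrt n} = o(2^{n\beta})$ when $\beta > 1/2$.

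**Details of the steps.** For step~(ii), by symmetry of $W$ we may assume the all-zero codeword is transmitted; picking any codeword $\bar{x}'$ of Hamming weight $w = d_{\min}$, the MAP decoder errs whenever the received $\bar{y}$ is at least as likely to have come from $\bar{x}'$ as from $\bar 0$, so $P_e(N,R) \ge \tfrac12 \prob\{\,\text{likelihood of }\bar{x}'\ge\text{likelihood of }\bar 0\,\}$, and since the two codewords differ only in $w$ coordinates this reduces to a binary hypothesis test on $w$ i.i.d.\ channel outputs. For step~(iii), the probability that $w$ i.i.d.\ outputs of $W$ (all from input $0$) make the product of likelihood ratios point the wrong way is bounded below by the Bhattacharyya-type estimate: it is at least $\bigl(\tfrac14 Z(W)^2\bigr)^{w}$ or similar — more simply, one can take the single deterministic output pattern that maximizes the reversed likelihood (e.g.\ all $w$ coordinates landing on the most favorable symbol for input $1$) and bound its probability from below by $p_{\min}^{w}$ where $p_{\min}>0$ is the smallest positive transition probability of $W$; this already gives $P_e \ge \tfrac12 p_{\min}^{d_{\min}}$. (For continuous-output channels one replaces $p_{\min}^w$ by an analogous bound on a set of positive measure, using $I(W)<1$... though here $I(W)>0$ suffices together with $W$ not being noiseless, which is implicit since $R>0$ is achievable only if there is noise — actually one should just assume $W$ is not perfect, or handle the perfect channel trivially since then $P_e=0$ and the bound is moot; I would add a remark that the interesting case is $Z(W)>0$.)

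**Putting it together and the main obstacle.** Combining, $P_e(N,R) \ge \tfrac12\, p_{\min}^{\,2^{n/2+c\sqrt n}}$. Taking logarithms base $2$, $\log_2 P_e(N,R) \ge -1 - (\log_2 \tfrac{1}{p_{\min}})\, 2^{n/2 + c\sqrt n}$. We want this to be $\ge -2^{n\beta}$, i.e.\ $(\log_2\tfrac1{p_{\min}})\,2^{n/2+c\sqrt n} + 1 \le 2^{n\beta}$, which holds for all $n \ge n(\beta,R,W)$ precisely because $\beta > \tfrac12$ forces $n\beta - (n/2 + c\sqrt n) \to +\infty$. This yields $P_e(N,R) > 2^{-N^\beta}$ for $n$ large, as claimed. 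The main obstacle — and the one point deserving care — is step~(iii): making the lower bound on the pairwise error probability clean and uniform over all symmetric B-DMCs, including continuous-output ones, without circular assumptions about $W$. The cleanest fix is to phrase it in terms of the Bhattacharyya parameter $Z = Z(W)$: a standard bound gives that the MAP pairwise error probability at distance $w$ is at least $c_1 \kappa^{w}$ with $\kappa$ bounded away from $0$ whenever $Z > 0$; and if $Z = 0$ the channel is noiseless, $P_e = 0$, and the inequality $P_e > 2^{-N^\beta}$ fails trivially but the theorem is vacuous/excluded. I would therefore either assume $W$ is not noiseless or state the result for $Z(W)>0$; modulo that, the argument is complete and short.
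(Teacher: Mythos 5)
Your proof takes essentially the same route as the paper: invoke Lemma~\ref{lem:mindistbnd} to get $d_{\min} \le 2^{n/2+c\sqrt n}$, lower-bound $P_e$ by $2^{-K d_{\min}}$ via a reduction to a binary hypothesis test over the $d_{\min}$ differing positions (you phrase this as a pairwise error bound, the paper as a genie-aided decoder -- the same reduction), and conclude since $d_{\min}=o(N^\beta)$ for $\beta>\tfrac12$. Your side remark about noiseless channels is a legitimate caveat the paper also glosses over (the constant $K$ would be infinite there), but for a B-DMC in Ar\i kan's sense the output alphabet is finite, so $p_{\min}$ is well-defined and the bound goes through whenever $Z(W)>0$.
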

\begin{proof}
For a code with minimum distance $d_{\min}$, the block error probability is
lower bounded by $2^{-K d_{\min}}$ for some positive
constant $K$, which only depends on the channel.  This is easily seen by
considering a genie decoder; the genie provides the correct value of all
bits except those which differ between the actually transmitted codeword
and its minimum distance cousin.  Lemma~\ref{lem:mindistbnd} implies that
for any $R>0$, for $n$ large enough, $d_{\min} < \frac{1}{K} N^{\beta}$ for any $\beta > \frac12$.  Therefore
$P_e(N,R) > 2^{-{N}^\beta}$ for any $\beta > \frac12$.  
\end{proof}

This combined with Theorem~\ref{thm:ArT}, implies that the SC decoder
achieves performance comparable to the MAP decoder  in terms of the
order of the exponent.

\section{Performance under Belief Propagation}
%In the previous section we have seen that polar codes achieve
%capacity of symmetric channels under SC decoding
%with error probability decaying exponentially in the square
%root of the blocklength (for sufficiently large blocklengths).
Theorem~\ref{thm:ArT} does not state what lengths are
needed in order to achieve the promised rapid decay in the error probability nor does it specify the involved
constants. Indeed, for moderate lengths polar codes under
SC decoding are not record breaking. 
In this section we show various ways to improve the performance
of polar codes by considering belief propagation (BP) decoding. 
BP was already used in \cite{Ari08b} to compare the performance of  
polar codes based on Ar\i kan's rule and RM rule. 
For all the simulation points
in the plots the $95\%$ confidence intervals are shown. In most
cases these confidence intervals are smaller than the point size and are therefore
not visible.

\subsection{Successive Decoding as a Particular Instance of BP}
For communication over a binary erasure channel (BEC) one can easily show the following.
\begin{lemma}
Decoding the bit $U_i$ with the SC decoder is equivalent to applying BP with the knowledge of 
$U_0,\dots,U_{i-1}$ and all other bits unknown (and a uniform prior on them).
\end{lemma}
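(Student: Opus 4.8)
The plan is to show that, over the $\BEC$, both the genie-aided $\SC$ decoder and $\BP$ (run to its fixed point with the channel observations $y_0^{N-1}$ on the code bits $\bar{x}$, with $u_0,\dots,u_{i-1}$ revealed by the genie, and with the remaining $u_j$ unobserved) decide $U_i$ by evaluating one and the same Boolean function of the erasure pattern, namely the function computed by Ar\i kan's recursive channel combining $W\mapsto(W^-,W^+)$. Two standard $\BEC$ facts will be used throughout: (i) $\BP$ on any factor graph converges to a unique fixed point that does not depend on the update schedule, and a variable is resolved there iff it lies in the iterative closure of the revealed variables under the local constraints; and (ii) the genie-aided $\SC$ estimate of $U_i$ from $(y_0^{N-1},u_0^{i-1})$ is exactly the bit-$\MAP$ estimate, so $U_i$ is resolved by $\SC$ iff the bit channel $W_N^{(i)}$ — itself a $\BEC$ — does not erase, an event that by construction is governed recursively by the transform $W\mapsto(W^-,W^+)$.

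The argument is then an induction on $n$. For $n=0$ the graph is a single channel use and both decoders just read off $y_0$, so the statement is trivial. For the inductive step I would use the recursive structure of the factor graph of $G_2^{\otimes n}$: it is one layer of $N/2$ degree-three ($G_2$) check nodes sitting on top of two disjoint copies of the factor graph of $G_2^{\otimes(n-1)}$, whose outputs interleave (after bit-reversal) to form $\bar{x}$. The key claim is that running $\BP$ on the whole graph with observations $(y_0^{N-1},u_0^{i-1})$ induces on the two half-size graphs exactly the two instances appearing in Ar\i kan's recursion: the ``$(\cdot)^-$'' half-block receives the combined channel observations and no side information beyond its own prefix, while the ``$(\cdot)^+$'' half-block additionally receives those bits of the other half-block that $\BP$ has already resolved, and in both halves the revealed $u$-bits translate to precisely the prefix of sub-block bits preceding the target. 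Granting this, the induction hypothesis applied to the relevant half-block says $\BP$ there resolves the target sub-block bit iff the recursive erasure condition holds; propagating this back through the top $G_2$ layer (where $u_0,\dots,u_{i-1}$ are known) then resolves $U_i$ itself, under exactly the erasure patterns for which $W_N^{(i)}$ does not erase $U_i$. Hence $\BP$ and genie-aided $\SC$ resolve $U_i$ under the same patterns, which is the claim.

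The step I expect to be the real obstacle is the ``key claim'' above: that no information is lost across the top $G_2$ layer and that the revealed prefix $u_0,\dots,u_{i-1}$ maps to the correct prefix in \emph{both} half-blocks. This is where the bit-reversal permutation and the $\SC$ order $\pi(0),\dots,\pi(N-1)$ do real work: that order is precisely ``resolve everything in the $-$ sub-block, then everything in the $+$ sub-block, recursively,'' so that by the time $\BP$ needs a prefix inside the $+$ half-block it has already resolved exactly the corresponding bits of the $-$ half-block — the side information that upgrades $(\cdot)$ to $(\cdot)^+$ — while the $-$ half-block must get by with the channel alone, which is the degradation $(\cdot)\to(\cdot)^-$. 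Making this bookkeeping precise (tracking which revealed $u_j$'s let $\BP$ peel which top-layer constraints and hence which sub-block bits become available) is the crux; once it is set up, the recursion for $\BP$-resolvability and the recursion for the $W_N^{(i)}$ erasure event coincide term by term and the induction closes. One remark worth including is why $\BP$ gains nothing over $\SC$ here even though $\BP$ on a loopy graph is in general weaker than $\MAP$: for this graph and for a \emph{prefix} observation pattern the same recursive decomposition shows there are no stopping sets that block $\BP$ without also blocking $\MAP$.
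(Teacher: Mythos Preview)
The paper does not prove this lemma; it is stated as something ``one can easily show'' for the $\BEC$ and left at that. So there is no paper argument to match against, only your proposal to assess.

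Your inductive plan is correct in outline and would close, but it is heavier than needed, and the part you flag as the ``real obstacle'' can be bypassed entirely. A two-line sandwich argument does the job and is almost certainly what the authors had in mind. Over the $\BEC$: (i) Ar\i kan's recursive likelihood computation for $U_i$ from $(y_0^{N-1},u_0^{i-1})$ is \emph{exact}, so the genie-aided $\SC$ estimate coincides with the bit-$\MAP$ estimate; (ii) those very recursions are nothing but the $\BP$ update rules at the $G_2$ gadgets, evaluated along one particular schedule on Ar\i kan's factor graph, and since $\BEC$ $\BP$ is monotone with a schedule-independent fixed point, the $\BP$ fixed point resolves $U_i$ whenever $\SC$ does; (iii) $\BP$ on any graph never beats bit-$\MAP$ on the $\BEC$. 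Chaining $\SC\Rightarrow\BP\Rightarrow\text{bit-}\MAP\Rightarrow\SC$ gives equality. No induction, no tracking of how the revealed prefix splits across half-blocks.

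Two smaller remarks on your write-up. First, you slip into sequential language (``by the time $\BP$ needs a prefix\ldots it has already resolved\ldots'') when analyzing a fixed-point computation; for the $\BEC$ this is harmless because monotonicity lets you pick any schedule, but you should say so explicitly rather than leave the reader to reconcile the two pictures. Second, your closing remark is phrased backwards: the question is not why $\BP$ ``gains nothing over $\SC$'' (it cannot, since $\SC$ is already bit-$\MAP$ here), but why $\BP$ does not \emph{lose} to bit-$\MAP$ despite the cycles in the graph. The sandwich argument answers that directly via the schedule inclusion, without any stopping-set analysis.
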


We conclude that if we use a standard BP algorithm (such a decoder
has access also to the information provided by the frozen bits
belonging to $U_{i+1},\dots,U_{N-1}$) then it is in
general strictly better than a SC decoder.
Indeed, it is not hard to construct explicit examples of codewords
and erasure patterns to see that this inclusion is strict (BP decoder
succeeds but the SC decoder does not). Figure~\ref{fig:BEC_BPcyclic}
shows the simulation results for the SC, BP and the MAP decoders when transmission 
takes place over the BEC. As we can see from these simulation results,
the performance of the BP decoder lies roughly half way between that of the
SC decoder and that of the MAP decoder.
%\begin{figure}[h!]
%\centering
%\input{ps/BEC}
%\caption{\label{fig:BEC}
%Block error probability for (i) SC, (ii) BP, and (iii) MAP decoder 
%when transmission takes place over the BEC$(\epsilon=\frac12)$. 
%The performance curves are shown for $n=10$ (top left), $11$ (top right),
%$12$ (bottom left), and $13$ (bottom right). For each simulation point
%in this and all subsequent figure the $95\%$ confidence intervals are shown. In most
%cases these confidence intervals are smaller than the point size and therefore
%not visible.}
%\end{figure}
For the BEC the {\em scheduling} of the individual messages is irrelevant
to the performance as long as each edge is updated until a fixed point has
been reached. For general B-DMCs the performance relies heavily on the
specific schedule. We found empirically that a good performance can be
achieved by the following schedule. Update the messages of each of the $n$
sections of the trellis from right to left and then from left to right and
so on.  Each section consists of a collection of $Z$ shaped sub-graphs.
We first update the lower horizontal edge, then the diagonal edge, and,
finally, the upper horizontal edge of each of these $Z$ sections.  In this
schedule the information is spread from the variables belonging to one
level to its neighboring level.  Figure~\ref{fig:AWGN} shows the simulation results
for the SC decoder and the BP decoder over the binary input additive
white Gaussian noise channel (BAWGNC) of capacity $\frac12$. Again,
we can see a marked improvement of the BP decoder over the SC decoder.
\begin{figure}[h!]
\centering
\input{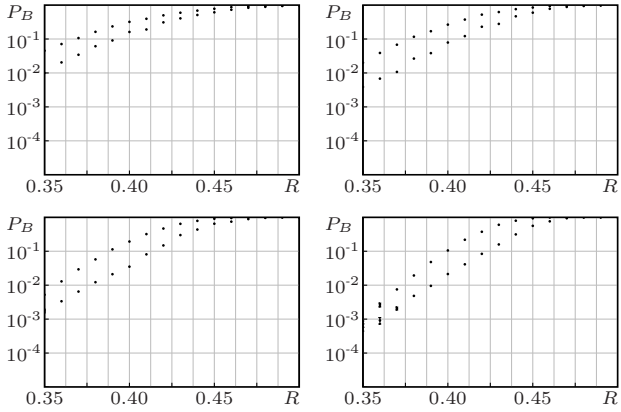}
\caption{\label{fig:AWGN}
Comparison of (i) SC and (ii) BP decoder in terms of block error 
probability, when transmission takes place over the BAWGNC$(\sigma=0.97865)$. 
The performance curves are shown for $n=10$ (top left), $11$ (top right), $12$ (bottom left), and $13$ (bottom right).
}
%\vskip -1cm
\end{figure}

\subsection{Overcomplete Representation: Redundant Trellises}
For the polar code of length $2^3$ one can check that all the three trellises
shown in Figure~\ref{fig:differenttrellis} are valid representations.
%the trellis shown in
%Figure~\ref{fig:differenttrellis}(a) is one of many possible representations. One can
%check that the trellises shown in Figure~\ref{fig:differenttrellis}(b) and
%Figure~\ref{fig:differenttrellis}(c) also
%represent the same code. 
In fact, for a code of block length $2^n$, there
exist $n!$ different representations obtained by different permutations
of the $n$ layers of connections. Therefore, we can connect the vectors
$\bar{x}$ and $\bar{u}$ with any number of these representations and
this results in an {\em overcomplete} representation (similar to the concept
used when computing the stopping redundancy of a code \cite{ScV06}).
For the BEC any such overcomplete representation only improves the
performance of the BP decoder \cite{ScV06}. Further, the decoding complexity scales
linearly with the number of different representations used. Keeping
the complexity in mind, instead of considering all the $n!$ factorial
trellises, we use only the $n$ trellises obtained by cyclic shifts (e.g.,
see Figure~\ref{fig:differenttrellis}).  The complexity of this algorithm
is $O(N (\log N)^2)$  as compared to $O(N\log N)$ of the SC decoder and
BP over one trellis. The performance of the BP decoder is improved significantly by using this 
overcomplete representation as shown in Figure~\ref{fig:BEC_BPcyclic}.
\begin{figure}
\centering
\input{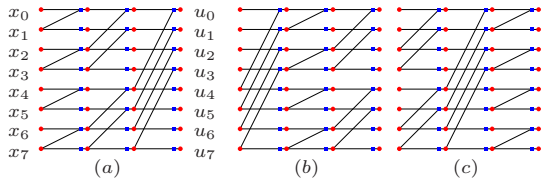}
\caption{\label{fig:differenttrellis} Figure (a) shows the trellis representation used by Ar\i
kan. Figures (b) and (c) show equivalent representations obtained by
cyclic shifts of the $3$ sections of trellis (a).}
\end{figure}
\begin{figure}[htp]
\centering
\input{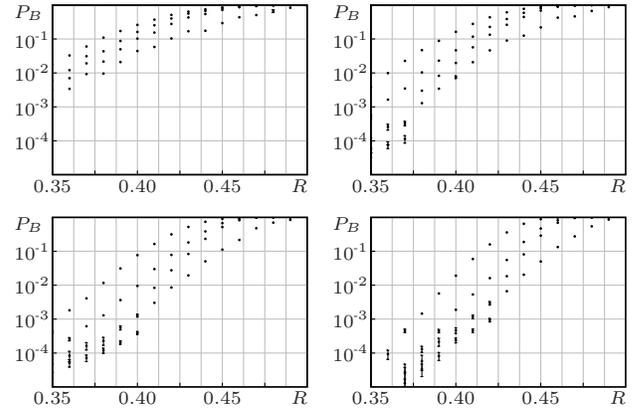}
\caption{\label{fig:BEC_BPcyclic}
Comparison of (i) SC, (ii) BP, (iii) BP with multiple trellises, and
(iv) MAP in terms of block error probability, when transmission takes place over
the BEC$(\epsilon=\frac12)$. The performance curves are shown for
$n=10$ (top left), $11$ (top right), $12$ (bottom left), $13$ (bottom right).
}
\end{figure}
We leave a systematic investigation of good schedules and choices of
overcomplete representations for general symmetric channels as an
interesting open problem.

%\begin{lemma}
%If the frozen bits are chosen as $u_0,\dots,u_k$ then the BP performance is
%equal on all the trellises.
%\end{lemma}
\subsection{Choice of Frozen Bits}
%In the previous sections we have chosen the frozen bits according to the rule
%given by Ar\i kan. This choice is optimized for the SC decoder. 
For the  BP or MAP decoding algorithm the choice of frozen bits as given
by  Ar\i kan is not necessarily optimal.
In the case of MAP decoding we observe (see Figure~\ref{fig:BEC_MAP}) that 
the performance is significantly improved by picking the
frozen bits according to the RM rule. This is not a coincidence;
$d_{\text{min}}$ is maximized for this choice.
This suggests that there might be a  rule which is optimized for BP. 
It is an interesting open question to find such a rule.
\begin{figure}[htp]
\centering
\input{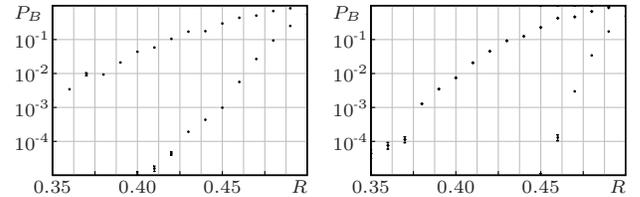}
\caption{\label{fig:BEC_MAP}
Comparison of block error probability curves under MAP decoding between codes  
picked according to Ar\i kan's rule and RM rule. The performance curves are
shown for $n=10$ (left) and $11$ (right).}
\end{figure}

\section{Source Coding}

In this section we show the performance of polar codes in the context of
source coding. We consider both lossless and lossy cases and show (in most cases) empirically
that they achieve the optimal performance in both 
cases. Let Ber$(p)$ denote a Bernoulli source with $\Pr(1) = p$.
% and $\Pr(0) = 1-p$. 
Let $h_2(\cdot)$ denote the binary entropy function and $h_2^{-1}(\cdot)$
its inverse.

\subsection{Lossless Source Coding}
\subsubsection{Single User}
The problem of lossless source coding of a Ber$(p)$ source can be mapped to the
channel coding problem over a binary symmetric channel (BSC) as shown in 
\cite{AlB72,Wei62}. Let
$\bar{x} = (x_0,\dots,x_{N-1})^T$ be a sequence of $N$ i.i.d. realizations of
the source.
Consider a code of rate $R$ represented by the parity check matrix $\mathbf{H}$. The vector
$\bar{x}$ is encoded by its syndrome $\bar{s} = \mathbf{H}\bar{x}$. The rate of the resulting source code is
$1-R$. The decoding problem is to estimate 
$\bar{x}$ given the syndrome $\bar{s}$. This is equivalent to estimating a noise vector
in the context of channel coding over BSC$(p)$. Therefore, if a
sequence of codes achieve capacity over BSC($p$), then the corresponding
source codes approach a rate $h_2(p)$ with vanishing error probability.

We conclude that polar codes achieve the Shannon bound for
lossless compression of a binary memoryless source. Moreover, using 
the trellis of Figure~\ref{fig:differenttrellis}(a), we can compute the syndrome with complexity $O(N\log N)$. 
%As we have seen
%before, the error probability of these codes decays only slowly
%with respect to the rate loss if we consider moderate block lengths.
%It is again better to use a BP decoding algorithm instead
%of SC.  However, the 
The source coding problem has a
considerable advantage compared to the channel coding problem. 
%One can decrease the probability
%of failure for the source coding problem with only a small
%loss in rate as follows.  
The encoder knows the information seen
by the decoder (unlike channel coding there is no noise involved
here).  Therefore, the encoder can also decode and check whether
the decoding is successful or not. In case of failure, the encoder
can retry the compression procedure by using a permutation of the source vector.
This permutation is fixed a priori and is known both to the encoder as well
as the decoder. In order to completely specify the system, the encoder
must inform the decoder which permutation was finally used.
This results in a small loss of rate but it brings down the probability of
decoding failure. Note that the extra number of bits that need to be transmitted grows
only logarithmically with the number of permutations used, but that the error probability
decays exponentially as long as the various permuted source vectors look like independent
source samples.  
With this trick one can make the curves essentially arbitrarily steep
with a very small loss in rate (see Figure~\ref{fig:SOURCE}). 
%The following plots show the
%performance of the polarization codes for length $1024$ and $2048$ with
%$1$, $2$ and $4$ permutations, respectively.

\begin{figure}[h!]
\centering
\input{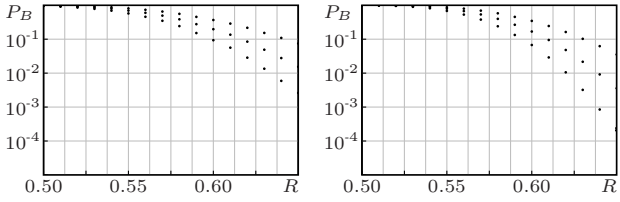}
\caption{\label{fig:SOURCE}
Comparison of SC decoding for a Ber$(0.11)$ source with 0,1,
and 2 bits for permutations. The performance curves are
shown for $n=10$ (left) and $11$ (right). By increasing the number
of permutations the curves can be made steeper and steeper. 
}
\end{figure}
%\vskip -2cm
\subsubsection{Slepian-Wolf}
Consider two Ber$(\frac12)$ sources $X$ and $Y$. Assume that
they are correlated as $X=Y\oplus Z$, where $Z\sim$ Ber$(p)$. 
%The
%celebrated result by Slepian and Wolf shows that the source $(X, Y)$
%can be compressed to its entropy $H(X, Y)$ by independently describing
%$X$ and $Y$. 
%
Recall that the Slepian-Wolf rate region is the unbounded polytope
described by $R_X > H(X) =1$, $R_Y>H(Y) =1$, $R_X + R_Y > H(X,Y) =
1+h_2(p)$.  The points $(R_X,R_Y)=(1,h_2(p))$ and $(R_X,R_Y)=(h_2(p),1)$
are the so-called {\em corner points}.  
%Since all points of the
%rate region are dominated by at least one convex combination
%(time-sharing) of these two corner points 
Because of symmetry
it suffices to show how to achieve one such corner point (say $(1,h_2(p))$). 
%Let us
%show how to achieve the point $(1,h_2(p))$ using polar coding.

Let $\bar{x}$ and $\bar{y}$ denote $N$ i.i.d. realizations of the two
sources. The scheme using linear codes is the following: The encoder for
$X$ transmits $\bar{x}$ as it is (since $H(X)=1$, and so no compression
is necessary). Let $\mathbf{H}$ denote the parity-check matrix of a code designed
for communication over the BSC$(p)$. The encoder for $Y$ computes the
syndrome $\bar{s} = \mathbf{H}\bar{y}$ and transmits it to the receiver. At
the receiver we know $\bar{s} = \mathbf{H}\bar{y}$ and $\bar{x}$, therefore we can compute
$\bar{s}' = \mathbf{H}(\bar{x} \oplus \bar{y}) = \mathbf{H}\bar{z}$. The resulting problem of estimating $\bar{z}$ 
is equivalent to the lossless compression of a Ber$(p)$ discussed in the previous section.
Therefore, once again polar codes provide an efficient solution.
The error probability curves under
SC decoding are equal to the curves shown in the Figure~\ref{fig:SOURCE}
with $0$ bits for permutations.  
%Unlike lossless compression, we cannot perform the trick of
%including more premutations to improve the performance, because the encoder for 
%the source $Y$ does not know the information available at the decoder (the vector $\bar{x}$).
%However, we can still improve the performance by using BP decoding.

\subsection{Lossy Source Coding}
\subsubsection{Single User}
We do not know of a mapping that converts the lossy source coding
problem to a channel coding problem. However, for the binary erasure
source considered in \cite{MaY03}, it was shown how to construct a
``good'' source code from a ``good'' (over the BEC) channel code. We briefly
describe their construction here and show that polar codes
achieve the optimal rate for zero distortion.

The source is a sequence of i.i.d. realizations of a random variable $S$ taking
values in $\{0,1,\ast\}$ with $\Pr(S = 0) = \Pr(S=1) = \frac12 (1-\epsilon),
\Pr(S=\ast) = \epsilon$.
%\begin{align*}
%S = \left\{\begin{array}{cl}
% 0, & \text{ w.p. }\frac12(1-\epsilon),\\
% 1, & \text{ w.p. }\frac12(1-\epsilon),\\
%\ast, & \text{ w.p. }\epsilon.
%\end{array}\right.
%\end{align*}
The reconstruction alphabet is $\{0,1\}$ and
the distortion function is given by 
$
\disto(\ast,0) = \disto(\ast,1) = 0, \disto(0,1) = 1.
$
For zero distortion, the rate of the rate-distortion function is given by
$R(D=0) = 1-\epsilon$. In \cite{MaY03} it was shown that the dual of a sequence of
channel codes which achieve the capacity of the BEC($1-\epsilon$) 
under BP decoding, achieve the rate-distortion pair for zero distortion 
using a message passing algorithm which they refer to as the erasure 
quantization algorithm. Polar codes achieve capacity under 
SC decoding. For communication over BEC, the performance under BP 
is at least as good as SC. Therefore, the dual of the polar codes 
designed for BEC$(1-\epsilon)$ achieve the optimum rate for zero 
distortion using the erasure quantization algorithm. Here, we show 
that the dual polar codes achieve the optimum rate for zero 
distortion even under a suitably defined SC decoding algorithm. 

The dual of a polar code is obtained by reversing the
roles of the check and variable nodes of the trellis in
Figure~\ref{fig:differenttrellis}(a) and reversing the roles
of the frozen and free bits. It is easy to see that $G_2^{\otimes n}
G_2^{\otimes n} = \mathbf{I}$. This implies that the dual of a polar code is also a
polar code. The {\em suitably} defined algorithm is given by SC
decoding in the order $\pi({N-1}),\dots,\pi(0)$, opposite of the original decoding
order. We refer to this as the dual order.

%\begin{figure}[htp]
%\centering
%\input{ps/dualtrellis}
%\caption{\label{fig:dualtrellis} The dual code of Figure~\ref{fig:trellis}. The
%frozen bits of Figure~\ref{fig:trellis} are the information bits here and
%vice-versa. The decoding order, indicated in the right, is the reverse order used for the
%original code.}
%\end{figure}

In \cite{Ari08}, the probability of erasure for bit $u_i$ under SC decoding is given by
$Z_n^{(i)}(1-\epsilon)$, computed as follows. 
Let the $n$-bit binary  expansion of
$i$ be given by $b_{n-1},\dots,b_0$. Let $Z_0 = 1-\epsilon$. The sequence
$Z_1,\dots,Z_n=Z_n^{(i)}(1-\epsilon)$ is recursively defined as follows:
\begin{align}\label{eqn:Z}
Z_k = \left\{ \begin{array}{lc}
Z_{k-1}^2, & \text{ if } b_{k-1} = 1,\\
1-(1-Z_{k-1})^2, & \text{ if } b_{k-1} = 0.
\end{array}\right.
\end{align}

From
\cite{ArT08} we know that for any $R<\epsilon$, there exists an $n_0$ such
that for $n \geq n_0$ we can find a set $I\subset \{0,\dots,2^{n}-1\}$ of
size $2^nR$ satisfying $\sum_{i\in I}Z_n^{(i)}(1-\epsilon) \leq 2^{-N^\beta}$ for any $\beta
< \frac12$. The set $I$ is used as information bits. The complement of $I$
denoted by $F$, is the set of frozen bits.

Let $\Zd_n^{(i)}(\epsilon)$ denote the probability of erasure for bit $u_i$ of the dual code
used for BEC$(\epsilon)$. One can check that for the dual code with the dual
order, the bit $u_i$ is equivalent to the bit $u_{N-i}$ of the original code
with the original decoding order. Let $\Zd_0 = \epsilon$. The recursive computation for
$\Zd_n^{(i)}$ is given by 
\begin{align}\label{eqn:dualZ}
\Zd_k = \left\{ \begin{array}{lc}
\Zd_{k-1}^2, & \text{ if } b_{k-1} = 0,\\
1-(1-\Zd_{k-1})^2, & \text{ if } b_{k-1} = 1.
\end{array}\right.
\end{align}
\begin{lemma}[Duality for BEC]
$Z_n^{(i)}(1-\epsilon) + \Zd_n^{(i)}(\epsilon) = 1$. 
\end{lemma}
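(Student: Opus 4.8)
The plan is to prove the identity $Z_n^{(i)}(1-\epsilon) + \Zd_n^{(i)}(\epsilon) = 1$ by induction on $n$, comparing the two recursions \eqref{eqn:Z} and \eqref{eqn:dualZ} step by step. The key observation is a \emph{complementation symmetry}: the map $z \mapsto 1-z$ interchanges the two branches of the recursion. Concretely, if $f_0(z) = 1-(1-z)^2$ and $f_1(z) = z^2$ are the two update maps in \eqref{eqn:Z}, then one checks directly that $1 - f_1(z) = 1-z^2 = f_0(1-z)$ and $1 - f_0(z) = (1-z)^2 = f_1(1-z)$; that is, $1 - f_b(z) = f_{1-b}(1-z)$ for $b\in\{0,1\}$. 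The dual recursion \eqref{eqn:dualZ} is precisely the original one with the roles of $b_{k-1}=0$ and $b_{k-1}=1$ swapped, so its update map for bit $b_{k-1}$ is $f_{1-b_{k-1}}$, while the original uses $f_{b_{k-1}}$.

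First I would set up the induction. The base case is $Z_0 = 1-\epsilon$ and $\Zd_0 = \epsilon$, so $Z_0 + \Zd_0 = 1$. For the inductive step, assume $Z_{k-1} + \Zd_{k-1} = 1$, i.e. $\Zd_{k-1} = 1 - Z_{k-1}$. Now split on the value of $b_{k-1}$. If $b_{k-1}=1$, then $Z_k = f_1(Z_{k-1}) = Z_{k-1}^2$ and $\Zd_k = f_0(\Zd_{k-1}) = 1-(1-\Zd_{k-1})^2 = 1 - Z_{k-1}^2$, using $1-\Zd_{k-1} = Z_{k-1}$; hence $Z_k + \Zd_k = 1$. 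The case $b_{k-1}=0$ is symmetric: $Z_k = f_0(Z_{k-1}) = 1-(1-Z_{k-1})^2$ and $\Zd_k = f_1(\Zd_{k-1}) = \Zd_{k-1}^2 = (1-Z_{k-1})^2$, again summing to $1$. This closes the induction and gives the claim with $i$ (equivalently its bit pattern $b_{n-1}\dots b_0$) fixed throughout.

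There is really no serious obstacle here; the content is entirely the elementary algebraic identity $1 - f_b(z) = f_{1-b}(1-z)$ together with the fact that \eqref{eqn:dualZ} is the branch-swapped version of \eqref{eqn:Z}. The one point that deserves a sentence of care is the bookkeeping between the bit index $i$ and the dual code: the excerpt already asserts that bit $u_i$ of the dual code under the dual order corresponds to bit $u_{N-i}$ of the original code, and correspondingly the binary expansion driving \eqref{eqn:dualZ} should be read consistently with \eqref{eqn:Z}. Once one fixes a single bit string $b_{n-1},\dots,b_0$ and runs both recursions against it — original using $f_{b_{k-1}}$, dual using $f_{1-b_{k-1}}$ — the induction above applies verbatim, and no reindexing subtlety survives into the actual computation. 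I would therefore present the proof as: (i) note the complementation identity for the pair of maps, (ii) observe that the dual recursion is the branch-swap of the original, (iii) do the two-line induction.
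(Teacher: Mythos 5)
Your proof is correct and follows exactly the route the paper sketches: the paper says only that ``the proof follows through induction on $n$ and using the equations \eqref{eqn:Z} and \eqref{eqn:dualZ},'' and your write-up supplies precisely that induction, organized around the clean complementation identity $1-f_b(z)=f_{1-b}(1-z)$.
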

The proof follows through induction on $n$ and using the equations $\eqref{eqn:Z}$ and $\eqref{eqn:dualZ}$.
%\begin{proof}
%The proof is done through induction on $n$.
%According to the definition, $Z_0 = 1-\epsilon, \Zd_0 = \epsilon$. Therefore
%$Z_0 + \Zd_0 = 1$. Let the induction hypothesis be $Z_k + \Zd_k = 1$. We now
%show that $Z_{k+1} + \Zd_{k+1} =1$. Let $b_k = 1$, \eqref{eqn:Z} implies
%$Z_{k+1} = Z_k^2$  and \eqref{eqn:dualZ} implies $\Zd_{k+1} = 1-(1-\Zd_{k})^2 =
%1-Z_k^2$. Therefore, $Z_{k+1} + \Zd_{k+1} = 1$. Similarly, we can show that if
%$b_{k} = 0$, then $Z_{k+1} + \Zd_{k+1} = 1$. This proves the lemma.
%\end{proof}
For the dual code the set $I$ (information set for BEC$(1-\epsilon)$) is used as frozen bits and the 
set $F$ is used as information bits. Let $\bar{x}$ be a 
sequence $2^n$ source realizations. The source vector $\bar{x}$ needs to be mapped to 
a vector $\bar{u}_{F}$ 
(information bits) such that $\disto(\bar{u}G_2^{\otimes n},\bar{x}) = 0$. 
The following lemma shows that such a vector $\bar{u}_F$ can be found using SC decoder 
with vanishing error probability.

\begin{lemma}
The probability of encoding failure for erasure quantization of the source $S$ using the dual 
of the polar code designed for the BEC$(1-\epsilon)$ and SC decoding with dual order
is bounded as $P_e(N) = o(2^{-N^\beta})$ for any $\beta < \frac{1}{2}$.
\end{lemma}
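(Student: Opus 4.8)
The plan is to recognize the quantization encoder, run in the dual order, as a sequence of erasure-decoding steps on the synthetic channels of the dual code over $\BEC(\epsilon)$, to bound the failure probability by a union bound over the frozen bits, and then to transport that bound to the already-known estimate $\sum_{i\in I}Z_n^{(i)}(1-\epsilon)\le 2^{-N^\beta}$ via the Duality Lemma.

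First I would describe the encoder precisely. Processing the bits in the dual order $\pi(N-1),\dots,\pi(0)$, at step $i$ we evaluate (by the erasure-channel SC rule) the belief about $u_i$ from the source word $\bar x$, reading $\ast$ as an erasure, together with the values already fixed for the earlier bits. If $i\in F$ we set $\hat u_i$ to the computed value when it is determined and to $0$ otherwise; this is always consistent. If $i\in I$ the frozen constraint is $u_i=0$, so we set $\hat u_i=0$ and continue unless the belief is determined and equal to $1$, in which case the encoder fails. A short linear-algebra argument (in the spirit of the erasure quantization of \cite{MaY03}) shows that the procedure returns a valid quantization whenever no frozen bit is forced to $1$; hence encoding fails if and only if some $i\in I$ has its SC belief determined to $1$, and in particular $P_e(N)\le\sum_{i\in I}\Pr[\text{belief of }u_i\text{ is determined}]$.

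The crucial point is that over the erasure channel the event ``$u_i$ is determined, given the earlier bits and the non-$\ast$ coordinates of $\bar x$'' is a function of the erasure pattern alone: it is insensitive both to the realized source values and to the particular values the encoder assigned to $u_0,\dots,u_{i-1}$, since those coordinates enter only as ``known''. Hence the probability that $u_i$ is not determined during quantization equals the erasure probability $\Zd_n^{(i)}(\epsilon)$ of the corresponding synthetic channel of the dual code over $\BEC(\epsilon)$ with the dual order --- here one uses the identification, noted above, of bit $u_i$ of the dual code in dual order with bit $u_{N-i}$ of the original code, so that recursion \eqref{eqn:dualZ} applies. Thus $P_e(N)\le\sum_{i\in I}\bigl(1-\Zd_n^{(i)}(\epsilon)\bigr)$.

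Finally the Duality Lemma gives $1-\Zd_n^{(i)}(\epsilon)=Z_n^{(i)}(1-\epsilon)$, so $P_e(N)\le\sum_{i\in I}Z_n^{(i)}(1-\epsilon)$. Since the source rate is $1-R$ and we take $R<\epsilon$ (so that $1-R>R(D=0)=1-\epsilon$), the set $I$ of size $2^nR$ can be chosen exactly as in the result of \cite{ArT08} recalled above: for every $\beta'<\tfrac12$ and all $n$ large, $\sum_{i\in I}Z_n^{(i)}(1-\epsilon)\le 2^{-N^{\beta'}}$; choosing $\beta<\beta'<\tfrac12$ yields $P_e(N)=o(2^{-N^\beta})$. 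The step that needs the most care is the middle one: showing that ``determinability'' of a bit in the quantization run has the same distribution as the erasure event of the matching synthetic channel --- one must check that sequential assignment of the earlier bits by the encoder, rather than by a genie, leaves this event unchanged, and one must pin down the dual-order relabelling of indices so that \eqref{eqn:dualZ}, and hence the Duality Lemma, is applied to the correct bit.
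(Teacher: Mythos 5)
Your proof is correct and follows essentially the same route as the paper's: error is possible only when some frozen bit $i\in I$ of the dual code resolves to a determined (non-erasure) value, a union bound gives $P_e(N)\le\sum_{i\in I}\bigl(1-\Zd_n^{(i)}(\epsilon)\bigr)$, and the Duality Lemma together with the bound from \cite{ArT08} finish the argument. You have additionally spelled out that determinability over the erasure channel is an event measurable with respect to the erasure pattern alone (so the encoder's choices for earlier undetermined free bits do not affect the union bound), a subtle point the paper leaves implicit but which is indeed needed to justify identifying the encoder's erasure probabilities with $\Zd_n^{(i)}(\epsilon)$.
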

\begin{proof}
%Let $I$ and $F$ denote the set of information and frozen
%bits of a code defined for BEC$(1-\epsilon)$. For the dual code, the frozen bits belong
%to $I$ and the information bits belong to $F$. 
Let $I$ and $F$ be as defined above.
%The decoding proceeds by 
%successively decoding the bits of the dual code with the dual decoding order.
The bits belonging to $I$ are already fixed to $0$ whereas the bits belonging to $F$ are free
to be set.  Therefore an error can only occur if one of the bits belonging to $I$ are set
to a wrong value. However, if the SC decoding results in an erasure for these bits,
these bits are also free to be set any value and this results in no error.
Therefore the probability of error can be upper bounded by the probability that at least one of
the bits in $I$ is not an erasure which in turn can be upper bounded by $\sum_{i\in
I}(1-\Zd_n^{(i)}(\epsilon)) = \sum_{i \in I}Z_n^{(i)}(1-\epsilon)
= o(2^{-N^\beta})$, where the last equality follows from Theorem~\ref{thm:ArT}.
\end{proof}

The fact that the dual code can be successfully applied to the
erasure source, suggests to extend this construction to more general
sources.  Let us try a similar construction to encode the Ber($\frac12$)
source with the Hamming distortion. To design a source code for
distortion $D$ we first design a rate $1-h_2(p)$ polar code
for the BSC$(p)$ where $p=h_2^{-1}(1-h_2(D))$. The design consists of choosing the generator
vectors from the rows of $G_2^{\otimes n}$ as explained in \cite{Ari08}.
The source code is then defined by the corresponding dual code with
the dual decoding order.
%, i.e., flipping the variable and check nodes as well as the the decoding order. 
Since the rate of the original
code is $1-h_2(p)$, the rate of the dual code is $h_2(p) = 1-h_2(D)$.
Figure~\ref{fig:LOSSY_WYNER} shows the rate-distortion performance of
these codes for various lengths. As the lengths increase, empirically
we observe that the performance approaches the rate-distortion
curve.
\begin{figure}[htp]
\centering
\input{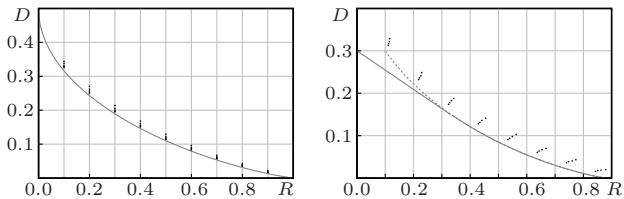}
\caption{\label{fig:LOSSY_WYNER}
(left) Rate-Distortion curves for a Ber$(0.11)$ source. The performance curves
are shown for $n = 9,11,13,15,17$.(right) The dashed line is the curve $R_{WZ}(D)$. 
The solid line is the lower convex envelope of  $R_{WZ}(D)$ and 
$(0,0.3)$ for a Ber$(\frac12)$ source with receiver having a side information obtained
through a BSC$(0.3)$. 
The performance curves are shown for $n =
9, 11, 13, 15$.}
\end{figure}
It is an interesting and challenging open problem to prove this observation rigorously.

\subsubsection{Wyner-Ziv}
Let $X$ denote a Ber$(\frac12)$ source which we want to compress.
The source is reconstructed at a receiver which has access to a
side information $Y$ correlated  to $X$ as $Y=X\oplus Z$ with $Z\sim$
Ber$(p)$. Let $\bar{x}$ and $\bar{y}$ denote a sequence of $N$ i.i.d. realizations 
of $X$ and $Y$. Wyner and Ziv have shown that the rate
distortion curve is given by the lower convex envelope of the curve
$R_{WZ}(D) = h_2(D\ast p) - h_2(D)$ and the point $(0,p)$, where $D \ast p
= D(1-p) + (1-D)p$.  As discussed in \cite{ZSE02}, nested linear
codes are required to tackle this problem.  The idea is to partition
the codewords of a code $\code_1$ into cosets of another code
$\code_2$. The code $\code_1$ must be a good source code and the
code $\code_2$ must be a good channel code.

Using polar codes, we can create this nested structure as follows.
Let $\code_1$ be a source code for distortion $D$ (Bernoulli source,
Hamming distortion) as described in the previous section.  Let  $F_s$
be the set of frozen (fixed to $0$) bits for this source code. Using this
code we quantize the source vector $\bar{x}$. Let the resulting vector be
$\hat{x}$. Note that the vector $\hat{x}$ is given by $\hat{u}G_2^{\otimes
n}$, where $\hat{u} = (\hat{u}_0,\dots,\hat{u}_{N-1})$ is such that
$\hat{u}_i = 0$ for $i \in F_s$ and for $i\in F_s^c$, $\hat{u}_i$ is
defined by the source quantization.

For a sequence of codes achieving the rate-distortion bound, it was shown in
\cite{ErZ02} that the ``noise'' $\bar{x} \oplus \hat{x}$ added due to the
quantization is comparable to a Bernoulli noise Ber$(D)$. Therefore, for
all practical purposes, the side information at the receiver $\bar{y}$
is statistically equivalent to the output of $\hat{x}$ transmitted
through a BSC$(D\ast p)$. 
Let $F_c$ be the set of frozen bits of a channel code for the BSC$(D\ast
p)$.  Let the encoder transmit the bits $\hat{u}_{F_c\backslash
F_s}$ to the receiver. Since the bits belonging to $F_s$ ($\hat{u}_{F_s}$)
are fixed to zero, the receiver now has access to $\hat{u}_{F_c}$
and $\bar{y}$.  The definition of $F_c$ implies that the receiver
is now able to decode $\hat{u}_{F_c^c}$ (and hence $\hat{x}$) with
vanishing error probability. To see the nested structure, note that
the code $\code_2$ and its cosets are the different channel codes
defined by different values of $\bar{u}_{F_c\backslash F_s}$ and that
these codes partition the source code $\code_1$.
The capacity achieving property of polar codes implies
that for $N$ sufficiently large, $|F_c| < N(h_2(D\ast p) + \delta)$ for
any $\delta > 0$. In addition, if we assume that polar codes
achieve the rate-distortion bound as conjectured in the previous section
and $F_s \subseteq F_c $, then for $N$ sufficiently large, the rate
required to achieve a distortion $D$ is $\frac{1}{N}|F_c \backslash F_s|
\leq h_2(D\ast p) - h_2(D) + \delta$ for any $\delta > 0$.  This would show
that polar codes can be used to efficiently realize the Wyner-Ziv
scheme. The performance of the above scheme for various lengths is shown
in Figure~\ref{fig:LOSSY_WYNER}.

%\begin{figure}[htp] \centering \input{}
%\caption{\label{fig:WYNERZIV} 
%The dashed line is the curve $R_{WZ}(D)$. 
%The solid line is the lower convex envelope of  $R_{WZ}(D)$ and 
%$(0,0.3)$ for a Ber$(\frac12)$ source with receiver having a side information obtained
%from the output of a BSC$(0.3)$. 
%The performance curves are shown for $n =
%9, 11, 13, 15$.} \end{figure}

\bibliographystyle{IEEEtran}
\bibliography{lth,lthpub}
\end{document}